\newtheorem{theorem}{Theorem}
\newtheorem{problem}{Problem}
\title{\LARGE \bf Active Attack Detection and Control in Constrained Cyber-Physical Systems Under Prevented Actuation Attack}
\author{Mehdi Hosseinzadeh, \IEEEmembership{Member,~IEEE}, and Bruno Sinopoli, \IEEEmembership{Fellow,~IEEE}
\thanks{This research has been supported by National Science Foundation under award numbers ECCS-1932530.}
\thanks{M. Hosseinzadeh and B. Sinopoli are with the Department of Electrical and Systems Engineering, Washington University in St. Louis, St. Louis, Missouri, USA (email: mehdi.hosseinzadeh@ieee.org; bsinopoli@wustl.edu).}
}
\begin{document}

\maketitle
\thispagestyle{empty}
\pagestyle{empty}

\begin{abstract}
This paper proposes an active attack detection scheme for constrained cyber-physical systems. Despite passive approaches where the detection is based on the analysis of the input-output data, active approaches interact with the system by designing the control input so to improve detection. This paper focuses on the prevented actuation attack, where the attacker prevents the exchange of information between the controller and actuators. The proposed scheme consists of two units: 1) detection, and 2) control. The detection unit includes a set of parallel detectors, which are designed based on the multiple-model adaptive estimation approach to detect the attack and to identify the attacked actuator(s). For what regards the control unit, a constrained optimization approach is developed to determine the control input such that the control and detection aims are achieved. In the formulation of the detection and control objective functions, a probabilistic approach is used to reap the benefits of the \textit{a priori} information availability. The effectiveness of the proposed scheme is demonstrated through a simulation study on an irrigation channel.
\end{abstract}

\section{Introduction}\label{sec:introduction}
Cyber-Physical Systems (CPSs) often employ distributed networks of embedded sensors and actuators that interact with the physical environment. The availability of cheap communication technologies (e.g., internet) has certainly improved scalability and functionality features in several applications. However, they have made CPSs susceptible to cyber security threats. This makes the cyber security to be of primary importance in safe operation of CPSs.


By assuming that sensors-to-controller and controller-to-actuators communication channels are the only ones in CPSs executed via internet and malicious agents can alter data flows in these channels, in general two classes of cyber attacks can be considered: (i) False Data Injection (FDI), and (ii) Denials of Service (DoS). A FDI (a.k.a. deception attack) affects the data integrity of packets by modifying their payloads \cite{Mo2014SignalProcessing,Mo2010Conf,Bai2017}. A DoS is the one that the attacker needs only to disrupt the system by preventing communication between the components. In this paper, we focus on a specific type of DoS attack, so-called Prevented Actuation Attack (PA2) \cite{Loukas2015,Carvalho2016}, where the attacker prevents the exchange of information between the controller and the actuators. An attacker can launch such attacks on the physical layer or cyber layer. Examples of real-world PA2 are: sleep deprivation torture attack \cite{Stajano1999} (a.k.a. battery exhaustion attack) that exhausts the battery of a surveillance robot or a medical implant until it can no longer function; door lock attack \cite{Ha2017} that suppresses the operation of a smart door by injecting `close' command every time an `open' command is received; and fatigue bearing attack \cite{Wu2019} that restrains the operation of the lubricant system in wind turbines to damage gearboxess.

Regardless of the type of attack, attack detection approaches presented in the literature can be classified as: (i) passive approaches, and (ii) active approaches. Note that we use the same terminology of the fault literature \cite{HosseinzadehPV} to classify attack detection approaches, as faults and attacks usually manifest themselves similarly in control systems despite their natural differences. In passive approaches, the input-output data of the system are measured (remotely or on-site), analyzed for any possible stealthy behavior, and then a decision about an attack is made. The passive approaches are widely studied and commonly used in many today's applications, e.g., \cite{Mo2014TCST,Zhang2016,Li2017,Li2018,Zhang2019}. However, they might not be able to recognize an attack when the input-output data are not informative enough. Also, they do not address stability/safety of the system during \textit{detection horizon}, a time interval from the instant an attack occurs to the instant when it is detected. 

The active approaches interact with the system during the detection horizon by means of a suitably designed input signal that is injected into the system to increase the quality of detection, shorten the detection horizon, and enforce stability/safety during the detection horizon. Contrary to the passive approaches, the active approaches are historically younger and still under development. To the best of the authors' knowledge, the only existing active attack detection approach in the literature is the physical authentication (a.k.a. digital watermarking) \cite{Mo2015,Irita2017,Sanchez2019}. The core idea of this method is to inject a known noisy input to the system and observe its effect on the output of the system. Thus, if an attacker is unaware of this physical watermark, the system cannot be adequately emulated, as the attacker is unable to consistently generate the component of the output associated with this known noisy input. The physical authentication, which is mainly used in detection of replay attack (a.k.a. playback attack) \cite{Mo2009}, can be effective if the noise injected at the system input is large enough to achieve good detection performance, which may degrade the control performance. Moreover, this method injects the noisy input irrespective of the probability of attack occurrence, which leads to unneeded loss in control performance. Furthermore, in the case of constrained systems \cite{Hosseinzadeh2019,HosseinzadehCDC2020,HosseinzadehTAC}, as shown in \cite{Hosseinzadeh2019Allerton}, the extra uncertainty injected to the system due to the noisy input should be taken into account in the design procedure, which leads to tighter constraints, and consequently more conservative behavior.

This paper answers the following question: \textit{How to determine the control input sequence for a constrained CPS such as to improve the detection performance without degrading the control performance?} Inspired by \cite{Puncochar2015}, this paper answers this question in the case of PA2. The proposed structure consists of two units: (i) detection unit, and (ii) control unit. The detection unit uses \textit{a priori} information and input-output data of the system over the detection horizon with a certain length to generate a decision variable which represents the situation of the system. More precisely, the detection unit recognizes the existence/inexistence of PA2 and distinguishes attacked actuators. The control unit generates the control input which is optimal according to a cost function and guarantees constraint satisfaction at all times. Both control and detection aims are defined in the form of stochastic objective functions, i.e., they are uncertain due to noises and initial condition. The open-loop information processing strategy \cite{Simandl2009} is then used to express the stochastic objective functions as deterministic functions. Finally, in order to evaluate the quality of the control input sequence in terms of detection and control aims, a compromise between the two aims is defined in the form of a multi-objective optimization problem whose solution can be computed by means of available optimization tools.


\section{Problem Statement}\label{sec:ProblemStatement}
Consider the following discrete-time LTI system:
\begin{align}
x_{k+1}=&Ax_k+Bu_k+w_k, \label{eq:modelfree1}  \\
y_k=&Cx_k+v_k, \label{eq:modelfree2}
\end{align}
where $x_k\in\mathbb{R}^n$ is the state vector at time $k$, $u_k\in\mathbb{R}^p$ is the control input at time $k$, $y_k\in\mathbb{R}^m$ is the vector of measurements from the sensors at time $k$, and the process noise $w_k\in\mathbb{R}^n$ and the measurement noise $v_k\in\mathbb{R}^m$ are mutually independent white Gaussian noises with zero mean and covariance matrices $H_w\in\mathbb{R}^{n\times n}$ and $H_v\in\mathbb{R}^{m\times m}$, respectively. We assume that the initial state $x_0$ is independent of $w_k$ and $v_k$, and has a Gaussian distribution with the known mean $\bar{x}_0$ and covariance matrix $H_{x,(0,0)}$.

In mathematical terms, the PA2 on the $i$-th actuator is equivalent to zeroing the $i$-th column in the matrix $B$. Thus, the dynamics of attack-free and under attack systems can be expressed using a single difference equation in the following form:
\begin{align}
x_{k+1|\mu}=&Ax_{k|\mu}+B_\mu u_k+w_k, \label{eq:modelfree3}\\
y_{k|\mu}=&Cx_{k|\mu}+v_k, \label{eq:modelfree4}
\end{align}
where $\mu\in\{\mu_1,\cdots,\mu_{2^p}\}$ is the index of the mode of the system, each $\mu_i$ having a known distribution $P(\mu_i)$, $B_\mu$ is the corresponding input matrix, $x_{k|\mu}$ is the state of the system operating in mode $\mu$ with $x_{0|\mu}=x_0$, and $y_{k|\mu}$ is the output of the system operating in mode $\mu$. 



Suppose that the system is subject to the following expectational linear constraints:
\begin{equation}\label{eq:constraints}
\text{E}\left[G_{x}x_{k|\mu_i}+G_{u}u_k\right]\leq g,~\forall k\geq0,~i\in\{1,\cdots,2^p\}
\end{equation}
where $\text{E}[\cdot]$ is the expectation function, and $G_{x}\in\mathbb{R}^{n_c\times n}$, $G_{u}\in\mathbb{R}^{n_c\times p}$, and $g\in\mathbb{R}^{n_c}$, with $n_c$ as the number of constraints.

\begin{problem}\label{prob:mainprob}
Consider system \eqref{eq:modelfree3}-\eqref{eq:modelfree4} which is subject to constraints \eqref{eq:constraints}. Suppose $N>0$ as the detection horizon, chosen by the designer. Find a control sequence $u_k,~k=0,\cdots,N-1$ such that\footnote{Without loss of generality and for the sake of simplicity we assume the detection horizon starts from 0.} at time $N$ the mode of the system is identified with high probability of correctness, while optimal control performance and constraint satisfaction are guaranteed during the detection horizon. 
\end{problem}


Before starting with the solution of Problem \ref{prob:mainprob}, let us compute the conditional probability density functions of the state and output. According to \eqref{eq:modelfree3}, for the control sequence $u_0,\cdots,u_{N-1}$, the mean value of the state at time $k$ is 
\begin{align}
\bar{x}_{k|\mu}=A^k\bar{x}_{0|\mu}+A^{k-1}B_\mu u_0+\cdots+B_\mu u_{k-1},\label{eq:xbar}
\end{align}
and the covariance matrix of the state at times $k$ and $l$ ($k\geq l$) can be computed as
\begin{align}
H_{x,(k,l)|\mu}:=&\text{E}\left\{(x_{k|\mu}-\bar{x}_{k|\mu})(x_{l|\mu}-\bar{x}_{l|\mu})^T\right\}\nonumber\\
=&A^kH_{x,(0,0)}\left(A^l\right)^T+A^{k-1}H_w\left(A^{l-1}\right)^T\nonumber\\
&+A^{k-2}H_w\left(A^{l-2}\right)^T+\cdots+A^{k-l}H_w\label{eq:Hxbar},
\end{align}
where $H_{x,(k,l)|\mu}=\left(H_{x,(l,k)|\mu}\right)^T\in\mathbb{R}^{n\times n}$. Therefore, the conditional probability density function of the state for the interval $[0,N]$ can be expressed as:
\begin{align}
P\left(x_{0:N}\big|\mu,u_{0:N-1}\right)\sim \mathcal{N}\left(\bar{x}_{0:N|\mu},H_{x|\mu}\right),
\end{align}
where $x_{0:N}:=[(x_0)^T,\cdots,(x_N)^T]^T\in\mathbb{R}^{n(N+1)}$, $u_{0:N-1}:=[(u_0)^T,\cdots,(u_{N-1})^T]^T\in\mathbb{R}^{pN}$, $\bar{x}_{0:N|\mu}:=[(\bar{x}_{0|\mu})^T,\cdots,(\bar{x}_{N|\mu})^T]^T\in\mathbb{R}^{n(N+1)}$, and $H_{x,(i-1,j-1)|\mu}$ is the element $(i.j)$ of $H_{x|\mu}\in\mathbb{R}^{n(N+1)\times n(N+1)}$.

Similarly, the conditional probability density function of the output for the the interval $[0,N]$ can be expressed as
\begin{align}
P\left(y_{0:N}\big|\mu,u_{0:N-1}\right)\sim \mathcal{N}\left(\bar{y}_{0:N|\mu},H_{y|\mu}\right),
\end{align}
where $y_{0:N}:=[(y_0)^T,\cdots,(y_N)^T]^T\in\mathbb{R}^{m(N+1)}$, and $\bar{y}_{0:N|\mu}:=[(\bar{y}_{0|\mu})^T,\cdots,(\bar{y}_{N|\mu})^T]^T\in\mathbb{R}^{m(N+1)}$ with $\bar{y}_{k|\mu}=C\bar{x}_{k|\mu},~k=0,\cdots,N$ as the mean value of the output at time $k$. Also, $H_{y|\mu}\in\mathbb{R}^{m(N+1)\times m(N+1)}$ is the covariance matrix of the output, where the $(i,j)$ element is
\begin{align}
H_{y,(i,j)|\mu}=\left\{
\begin{array}{rl}
CH_{x,(i-1,j-1)|\mu}C^T, & i>j\\
CH_{x,(i-1,j-1)|\mu}C^T+H_v & i=j
\end{array}
\right..\label{eq:Hybar}
\end{align}

\section{Detection Unit}\label{sec:detectionunit}
The system \eqref{eq:modelfree3}-\eqref{eq:modelfree4} can be seen as a $2^p$-model system, where each model corresponds to one mode. Thus, in order to detect existence/inexistence of PA2, and to identify which actuator(s) is under attack, it is only needed to identify the true $\mu$ at the end of detection horizon. The fact that one of the $2^p$ models is the true one can be modeled by a hypothesis random variable that must belong to a discrete set of hypothesis $\{\mu_1,\cdots,\mu_{2^p}\}$, where the event $\mu_i$ means that the $i$-th model is the one that is generating the data.

One Bayesian approach to hypothesis testing is to base decisions on the posterior probabilities, i.e., the probability of the mode $\mu_i$ conditioned by the input-output data. In mathematical terms, the posterior probabilities at time $k$ are denoted as $P\left(\mu_i\big|y_{0:k},u_{0:k-1}\right)$, where at time $k=0$ the posterior probabilities are equal to the prior probabilities, i.e., $P\left(\mu_i\big|u_0^T\right)=P(\mu_i)$.

The conditioned posterior probabilities can be computed using the Multiple-Model Adaptive Estimator (MMAE) structure \cite{Fekri2004,Hassani2009,Sadati2018}. The MMAE (a.k.a. partitioned algorithm) involves the parallel operation of $2^p$ Kalman filters (each matched to one of the postulated models), where the residuals of the Kalman filters are used to compute the conditional posterior probabilities. The rationale is that the highest posterior probability corresponds to the true model of the system. It is shown that the correct model  can be identified ``almost surely" \cite{Fekri2006,Rotondo2017}.

It is easy to show that when the attack happens sometime within a detection horizon, it might remain undetected until the end of the following detection horizon. It can be also shown that in the case of a smart attack (i.e., the attack happens sometime within a detection horizon and lasts for a wisely selected period of time), a single detector might not be adequate to detect the attack. Therefore, we propose to deploy $N$ parallel detectors, where the $d$-th detector identifies the mode $\hat{\mu}^d$ via
\begin{equation}\label{eq:detector}
\hat{\mu}^d=\arg\max\limits_{i\in\{1,2,\cdots,2^p\}}P\left(\mu_i\big|y_{0:N},u_{0:N-1}\right).
\end{equation}



We assume that every detector identifies the mode of the system only in $N$ time steps. Note that $N$ defines the the trade-off between the detection quality and detection performance. Large values of $N$ decreases the probability of making an incorrect decision during the transient of the posterior probabilities. However, when the attack duration is too small compared to the length of the detection horizon, the attack might remain undetected.

\section{Control Unit}\label{sec:ControlUnit}

\subsection{Control Objective Function}
The control aim is to track the desired reference $r_k\in\mathbb{R}^m$ while penalizing the control effort. The control objective function can be formulated as
\begin{align}
J_c(u_{0:N-1})=\text{E}\Bigg[&\sum\limits_{k=0}^{N}\left\Vert y_k-r_k\right\Vert^2_Q+\sum\limits_{k=0}^{N-1}\left\Vert u_k\right\Vert^2_R\Bigg]\label{eq:ControlOF1}
\end{align}
where $Q=Q^T\in\mathbb{R}^{m\times m}$ is a positive semi-definite matrix and $R=R^T\in\mathbb{R}^{p\times p}$ is a positive definite matrix. 

\begin{figure*}[b]
\hrule
\setcounter{equation}{17}
\begin{align}
F_1=&\left[\begin{matrix}B_{\mu_i}^T(A^{k-1})^TC^TQCA^{k-1}B_{\mu_i}&\cdots&B_{\mu_i}^T(A^{k-1})^TC^TQCB_{\mu_i}\\\vdots&\ddots&\vdots\\(B_{\mu_i})^T C^TQCA^{k-1}B_{\mu_i}&\cdots&(B_{\mu_i})^TC^TQCB_{\mu_i}\end{matrix}\right],\label{eq:F1}\\
F_2=&2(\bar{x}_{0|\mu_i})^T(A^k)^TQ\left[\begin{matrix}A^{k-1}B_{\mu_i}&A^{k-2}B_{\mu_i}&\cdots&B_{\mu_i}\end{matrix}\right]-2r_k^TQC\left[\begin{matrix}A^{k-1}B_{\mu_i}&A^{k-2}B_{\mu_i}&\cdots&B_{\mu_i}\end{matrix}\right],\label{eq:F2}\\
F_3=&\sum\limits_{k=0}^N\sum\limits_{i=0}^{2^p}P(\mu_i)(\bar{x}_{0|\mu_i})^T(A^k)^TC^TQCA^k\bar{x}_{0|\mu_i}+\text{Tr}\left(Q\sum\limits_{k=0}^{N}\sum\limits_{i=1}^{2^p}P(\mu_i)H_{y,(k,k)|\mu_i}\right)+\sum\limits_{k=0}^{N}r_k^TQr_k\nonumber\\
&+\sum\limits_{k=0}^N\sum\limits_{i=0}^{2^p}P(\mu_i)r^TQCA^k\bar{x}_{0|\mu_i}.\label{eq:F3}
\end{align}
\end{figure*}

The objective function \eqref{eq:ControlOF1} is a stochastic function, where the uncertainties are due to noises and the initial condition. In this paper, instead of using deterministic approaches (i.e., assuming uncertainties as upper-bounded signals), we will focus on probabilistic approaches, where available \textit{a priori} information can be used in obtaining the optimal control sequence. In particular, we will use the open loop approach. This approach is based on the information available at the beginning of each detection horizon (i.e., $\bar{x}_0$ and $H_{x,(0,0)}$), while the measurements during the detection horizon are not used. 

\begin{theorem}\label{theorem1}
Consider system \eqref{eq:modelfree3}-\eqref{eq:modelfree4}, and 
control objective function \eqref{eq:ControlOF1}. Suppose that the open loop approach is used to determine the control sequence, i.e., the entire control sequence $u_{0:N-1}$ is determined at the beginning of the prediction horizon. Then, control objective function \eqref{eq:ControlOF1} can be expressed as an explicit function of the control sequence.
\end{theorem}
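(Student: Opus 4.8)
The plan is to push the expectation in \eqref{eq:ControlOF1} through the two sums and to evaluate each resulting quadratic term with the standard identity for the mean of a quadratic form, and then to substitute the explicit expressions for the conditional mean \eqref{eq:xbar} and covariance \eqref{eq:Hxbar}--\eqref{eq:Hybar} derived earlier. First I would dispose of the control-effort term: under the open loop approach the entire sequence $u_{0:N-1}$ is fixed at the start of the horizon and is therefore deterministic, so $\text{E}\big[\sum_{k=0}^{N-1}\Vert u_k\Vert_R^2\big]=\sum_{k=0}^{N-1}u_k^TRu_k$ exactly, contributing a purely quadratic-in-$u$ block.

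Next I would treat the tracking term. Conditioning on the mode $\mu$ and recalling that $r_k$ is deterministic, I would invoke the quadratic-form identity: for a random vector $y_{k|\mu}$ with mean $\bar{y}_{k|\mu}$ and covariance $H_{y,(k,k)|\mu}$, and symmetric $Q$,
\[
\text{E}\big[(y_{k|\mu}-r_k)^TQ(y_{k|\mu}-r_k)\big]=(\bar{y}_{k|\mu}-r_k)^TQ(\bar{y}_{k|\mu}-r_k)+\text{Tr}\big(QH_{y,(k,k)|\mu}\big).
\]
Averaging over the modes with the prior weights $P(\mu_i)$ then replaces each conditional expectation by $\sum_i P(\mu_i)(\cdots)$. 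The crucial observation is that $H_{y,(k,k)|\mu}$, as given by \eqref{eq:Hybar}--\eqref{eq:Hxbar}, does not depend on $u$, so the whole trace contribution is a $u$-independent constant; this is precisely the feature of the open loop strategy that makes the cost explicit.

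I would then substitute $\bar{y}_{k|\mu}=C\bar{x}_{k|\mu}$ with $\bar{x}_{k|\mu}$ from \eqref{eq:xbar}, which is affine in $u_{0:k-1}$. Writing $\bar{y}_{k|\mu}-r_k=CA^k\bar{x}_{0|\mu}-r_k+C\big[A^{k-1}B_\mu\ \cdots\ B_\mu\big]u_{0:k-1}$ and expanding the quadratic form separates it into three pieces: a block quadratic in $u$ whose coefficient matrix is assembled from the products $B_\mu^T(A^j)^TC^TQCA^\ell B_\mu$, reproducing $F_1$ in \eqref{eq:F1}; a linear-in-$u$ term arising from the cross products of $u$ with $CA^k\bar{x}_{0|\mu}$ and with $r_k$, reproducing $F_2$ in \eqref{eq:F2}; and the remaining $u$-free constant, which together with the trace term and the $r_k^TQr_k$ contributions forms $F_3$ in \eqref{eq:F3}. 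Summing over $k=0,\dots,N$ and over the weighted modes, and adding the control-effort block, yields $J_c$ as an explicit (indeed quadratic) function of $u_{0:N-1}$, establishing the claim.

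I expect the main obstacle to be bookkeeping rather than conceptual difficulty: the stacked-vector indexing over the horizon, the simultaneous expectation over the continuous noise and initial state on the one hand and the discrete mode $\mu$ on the other, and the correct alignment of the shifted powers $A^{k-1}B_\mu,\dots,B_\mu$ inside the $F_1$ blocks. The one point requiring genuine care is verifying that the noise and initial-state covariance enter \emph{only} through the $u$-independent trace term and never couple to $u$, since this decoupling is exactly what guarantees that the open-loop cost reduces to an explicit function of the control sequence.
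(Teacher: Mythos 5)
Your proof is correct and follows essentially the same route as the paper: both condition on the mode, use the mean--covariance decomposition of the second moment of $y_k$ (your quadratic-form identity is just the paper's $\text{E}[y_ky_k^T\,|\,u_{0:N-1}]=\sum_i P(\mu_i)(\bar{y}_{k|\mu_i}\bar{y}_{k|\mu_i}^T+H_{y,(k,k)|\mu_i})$ combined with the cross and constant terms), exploit the fact that $H_{y,(k,k)|\mu}$ is $u$-independent, and then substitute the affine expression \eqref{eq:xbar} for $\bar{x}_{k|\mu}$ to collect the cost into the quadratic, linear, and constant blocks $F_1$, $F_2$, $F_3$.
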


\begin{proof}
When the open loop approach is used, the objective function \eqref{eq:ControlOF1} can be expressed as \setcounter{equation}{12}
\begin{align}\label{eq:ControlOF2}
J_c(\cdot)=&\text{Tr}\Bigg(Q\sum\limits_{k=0}^{N}\text{E}\Big[y_ky_k^T\Big|u_{0:N-1}\Big]\Bigg)+\sum\limits_{k=0}^{N}r_k^TQr_k\nonumber\\
&-2\sum\limits_{k=0}^{N}r_k^TQ\text{E}\Big[y_k\Big|u_{0:N-1}\Big]+\sum\limits_{k=0}^{N-1}u_k^TRu_k,
\end{align}
where $\text{Tr}(\cdot)$ is the trace function. We know that\footnote{$\text{Cov}(Y,Y)=\text{E}\{YY^T\}-\text{E}\{Y\}\left(\text{E}\{Y\}\right)^T$ for the random vector $Y$.}
\begin{align}
\text{E}\Big[y_ky_k^T\big|u_{0:N-1}\Big]&=\sum\limits_{i=1}^{2^p}P(\mu_i)\left(\bar{y}_{k|\mu_i}\bar{y}_{k|\mu_i}^T+H_{y,(k,k)|\mu_i}\right),
\end{align}
where $\text{Cov}(\cdot)$ is the covariance function. Thus, the control objective function can be rewritten as
\begin{align}
J_c(\cdot)=&\text{Tr}\left(Q\sum\limits_{k=0}^{N}\sum\limits_{i=1}^{2^p}P(\mu_i)\left(\bar{y}_{k|\mu_i}\bar{y}_{k|\mu_i}^T+H_{y,(k,k)|\mu_i}\right)\right)\nonumber\\
&+\sum\limits_{k=0}^{N-1}u_k^TRu_k+\sum\limits_{k=0}^{N}r_k^TQr_k\nonumber\\
&-2\sum\limits_{k=0}^{N}\sum\limits_{i=1}^{2^p}P(\mu_i)r_k^TQ\bar{y}_{k|\mu_i},
\end{align}
which due to the fact that $\bar{y}_{k|\mu}=C\bar{x}_{k|\mu}$, it implies that:
\begin{align}\label{eq:Jcybar}
J_c(\cdot)=&\sum\limits_{k=0}^{N}\sum\limits_{i=1}^{2^p}P(\mu_i)\bar{x}_{k|\mu_i}^TC^TQC\bar{x}_{k|\mu_i}+\sum\limits_{k=0}^{N-1}u_k^TRu_k\nonumber\\
&-2\sum\limits_{k=0}^{N}\sum\limits_{i=1}^{2^p}P(\mu_i)r_k^TQC\bar{x}_{k|\mu_i}+\sum\limits_{k=0}^{N}r_k^TQr_k\nonumber\\
&+\text{Tr}\left(Q\sum\limits_{k=0}^{N}\sum\limits_{i=1}^{2^p}P(\mu_i)H_{y,(k,k)|\mu_i}\right).
\end{align}

Finally, according to \eqref{eq:xbar}, \eqref{eq:Jcybar} can be rewritten as
\begin{align}\label{eq:ControlOFFinal}
J_c(\cdot)=&\sum\limits_{k=0}^{N}\sum\limits_{i=1}^{2^p}P(\mu_i)u_{0:k-1}^TF_1u_{0:k-1}+\sum\limits_{k=0}^{N-1}u_k^TRu_k\nonumber\\
&+\sum\limits_{k=0}^{N}\sum\limits_{i=1}^{2^p}P(\mu_i)F_2u_{0:k-1}+F_3,
\end{align}
where $F_1$, $F_2$, and $F_3$ are given in \eqref{eq:F1}-\eqref{eq:F3}, respectively. This completes the proof.
\end{proof}

\subsection{Detection Objective Function}
Suppose that the detector given in \eqref{eq:detector} is used to identify the mode of the system. In order to determine the control sequence during the detection horizon such that the probability of an incorrect identification is minimized, we can use the following detection objective function\setcounter{equation}{20}
\begin{equation}\label{eq:detectionobjectivefunction}
J_d(u_{0:N})\triangleq \text{E}\big[\sigma(\hat{\mu}^1)\big],
\end{equation}
where $\sigma(\hat{\mu}^1)$ is zero when the identified mode is the actual mode of the system, and is non-zero (we set it to 1 for simplicity) otherwise. Note that since the control sequence $u_{0:N-1}$ is assumed to be applied at time $k=0$, only the first detection horizon is taken into account in the formulation of the detection objective function. In other words, the determined control signal is optimal only for Detector\#1.

\begin{theorem}\label{theorem2}
Consider system \eqref{eq:modelfree3}-\eqref{eq:modelfree4}, and 
detection objective function \eqref{eq:detectionobjectivefunction}. Suppose that the open loop approach is used to determine the control sequence. Then, the detection objective function can be upper bounded with an explicit function of the control sequence.
\end{theorem}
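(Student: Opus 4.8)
The plan is to recognize that $J_d$ is precisely the probability that the MAP detector \eqref{eq:detector} misidentifies the mode over the first detection horizon, and then to upper bound this multi-hypothesis error probability by a sum of pairwise Bhattacharyya terms whose dependence on $u_{0:N-1}$ is explicit. Since $\text{E}[\sigma(\hat{\mu}^1)]$ counts an incorrect decision, the total probability theorem gives
\begin{equation}
J_d=\sum_{i=1}^{2^p}P(\mu_i)\,P\!\left(\hat{\mu}^1\neq\mu_i\,\big|\,\mu_i,u_{0:N-1}\right).
\end{equation}
Given that the true mode is $\mu_i$, rule \eqref{eq:detector} errs only if $P(\mu_j)P(y_{0:N}|\mu_j)\geq P(\mu_i)P(y_{0:N}|\mu_i)$ for some $j\neq i$, so the union bound over these $2^p-1$ events yields
\begin{equation}
P\!\left(\hat{\mu}^1\neq\mu_i\,\big|\,\mu_i\right)\leq\sum_{j\neq i}P\!\left(\frac{P(y_{0:N}|\mu_j)}{P(y_{0:N}|\mu_i)}\geq\frac{P(\mu_i)}{P(\mu_j)}\,\bigg|\,\mu_i\right).
\end{equation}

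Next I would bound each pairwise term by Markov's inequality with the symmetric exponent $s=\tfrac12$ (equivalently the Chernoff/Bhattacharyya step $\min(a,b)\le\sqrt{ab}$), producing
\begin{equation}
P\!\left(\frac{P(y_{0:N}|\mu_j)}{P(y_{0:N}|\mu_i)}\geq\frac{P(\mu_i)}{P(\mu_j)}\,\bigg|\,\mu_i\right)\leq\sqrt{\frac{P(\mu_j)}{P(\mu_i)}}\;\rho_{ij},
\end{equation}
where $\rho_{ij}:=\int\sqrt{P(y_{0:N}|\mu_i)\,P(y_{0:N}|\mu_j)}\,\mathrm{d}y_{0:N}$ is the Bhattacharyya coefficient of the two conditional output densities. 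Summing over $i$ then collapses the prior-weighted double sum into the symmetric bound
\begin{equation}
J_d\leq 2\sum_{i<j}\sqrt{P(\mu_i)P(\mu_j)}\;\rho_{ij}.
\end{equation}

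Finally I would use that both conditional densities are Gaussian, $\mathcal{N}(\bar{y}_{0:N|\mu},H_{y|\mu})$, for which $\rho_{ij}$ admits the closed form
\begin{equation}
\rho_{ij}=\frac{\big(\det H_{y|\mu_i}\det H_{y|\mu_j}\big)^{1/4}}{\big(\det\tfrac12(H_{y|\mu_i}+H_{y|\mu_j})\big)^{1/2}}\exp\!\left(-\tfrac18\Delta_{ij}^T\Big(\tfrac{H_{y|\mu_i}+H_{y|\mu_j}}{2}\Big)^{-1}\Delta_{ij}\right),
\end{equation}
with $\Delta_{ij}:=\bar{y}_{0:N|\mu_i}-\bar{y}_{0:N|\mu_j}$. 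The key point is that $H_{y|\mu}$ is independent of the control, so the determinant prefactor and the quadratic-form matrix are control-independent constants, whereas $\Delta_{ij}$ is linear in $u_{0:N-1}$: because $\bar{x}_{0|\mu_i}=\bar{x}_{0|\mu_j}=\bar{x}_0$, expression \eqref{eq:xbar} gives $\bar{x}_{k|\mu_i}-\bar{x}_{k|\mu_j}=\sum_{l=0}^{k-1}A^{k-1-l}(B_{\mu_i}-B_{\mu_j})u_l$, so stacking over $k=0,\dots,N$ yields $\Delta_{ij}=\mathcal{M}_{ij}u_{0:N-1}$ for a matrix $\mathcal{M}_{ij}$ assembled from $C$, $A$, and $B_{\mu_i}-B_{\mu_j}$. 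Substituting renders the right-hand side an explicit function of $u_{0:N-1}$, establishing the claim.

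The step I expect to be the main obstacle is the passage from the joint arg-max decision over all $2^p$ hypotheses to a quantity explicit in the control: the MAP rule couples every mode simultaneously and its exact error probability is intractable, so the union bound together with the $s=\tfrac12$ Chernoff/Bhattacharyya inequality is what decouples it into pairwise Gaussian integrals admitting a closed form. Once this reduction is in place, evaluating the Gaussian Bhattacharyya integral and verifying the linearity of $\Delta_{ij}$ through \eqref{eq:xbar} and \eqref{eq:Hybar} is routine.
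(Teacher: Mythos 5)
Your proof is correct, and it lands on essentially the same final object as the paper: a prior-weighted sum of pairwise Gaussian Bhattacharyya terms that is explicit in $u_{0:N-1}$ because the stacked output means are affine in the control (their pairwise differences linear, since $\bar{x}_{0|\mu_i}=\bar{x}_{0|\mu_j}=\bar{x}_0$) while the covariances $H_{y|\mu}$ are control-independent; indeed your closed form for $\rho_{ij}$ is exactly $e^{-\phi_{ij}}$ with $\phi_{ij}$ as in \eqref{eq:phi}. The genuine difference is in how the multi-hypothesis MAP error is decoupled into pairs. You use the union bound over the events $P(\mu_j)P(y_{0:N}|\mu_j)\geq P(\mu_i)P(y_{0:N}|\mu_i)$ followed by the $s=\tfrac12$ Chernoff/Markov step, which yields $J_d\leq 2\sum_{i<j}\sqrt{P(\mu_i)P(\mu_j)}\,\rho_{ij}$. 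The paper instead follows its reference [Blackmore2006]: writing $a_i(y)=P(\mu_i)P(y_{0:N}|\mu_i,u_{0:N-1})$, the error probability is $\int\left(\sum_i a_i(y)-\max_i a_i(y)\right)dy_{0:N}$, and the elementary pointwise inequality $\sum_i a_i-\max_i a_i\leq\sum_{i<j}\min(a_i,a_j)\leq\sum_{i<j}\sqrt{a_i a_j}$ gives $\hat{J}_d=\sum_{i<j}\sqrt{P(\mu_i)P(\mu_j)}e^{-\phi_{ij}}$ as in \eqref{eq:Jdhat}, i.e.\ the same bound without your factor of $2$. This looseness is harmless for the theorem as stated (any explicit upper bound suffices), though in practice it would tighten the feasible set defined by $\hat{J}_d(\cdot)\leq\bar{J}_d$ in \eqref{eq:OptProb1}. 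In exchange, your derivation is fully self-contained and in fact more careful than the paper's own text: the paper's intermediate step of bounding $\sigma(\hat{\mu}^1)\leq1$ in \eqref{eq:Jd2} produces an integral that equals $1$, so as written that step is vacuous and the real pairwise-decoupling work is deferred entirely to the citation, whereas your union-bound route supplies the missing argument explicitly.
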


\begin{proof}
By using the open loop approach, the detection objective function \eqref{eq:detectionobjectivefunction} can be expressed as
\begin{align}
J_d(\cdot)=&\text{E}\big[\sigma(\hat{\mu}^1)\big|u_{0:N-1}\big]\nonumber\\
=&\int\limits_{\mathbb{R}^{m(N+1)}}\sum\limits_{i=1}^{2^p}\sigma(\hat{\mu}^1)P\left(\mu_i\big|y_{0:N},u_{0:N-1}\right)\cdot\nonumber\\
&~~~~~~~~~~~~~~~~~~P(y_{0:N}\big|u_{0:N-1})dy_{0:N},
\end{align}
which according to Bayes' theorem, it implies that
\begin{align}
J_d(\cdot)=&\int\limits_{\mathbb{R}^{m(N+1)}}\sum\limits_{i=1}^{2^p}\sigma(\hat{\mu}^1)P(y_{0:N}\big|\mu_i,u_{0:N-1})P\left(\mu_i\right)dy_{0:N},\label{eq:Jd1}
\end{align}
which is concluded due to the fact\footnote{$P(\mu_i|u_{0:N-1})=\frac{P(u_{0:N-1}|\mu_i)}{P(u_{0:N-1})}P(\mu_i)=P(\mu_i)$, since $u_{0:N-1}$ is deterministic, and consequently $P(u_{0:N-1}|\mu_i)=P(u_{0:N-1})=1$.} that the probability of the mode $\mu_i$ conditioned by only input data is equal to the probability of the mode $\mu_i$.

The right side of \eqref{eq:Jd1} cannot be computed analytically and its numerical evaluation is computationally expensive. Due to this reason, in the following we will find an upper bound for the detection objective function $J_d(u_0,\cdots,u_{N-1})$.

Since $0\leq\sigma(\hat{\mu}^1)\leq1$, it implies that:
\begin{align}
J_d(\cdot)\leq&\int\limits_{\mathbb{R}^{m(N+1)}}\sum\limits_{i=1}^{2^p} P(y_{0:N}\big|\mu_i,u_{0:N-1})P\left(\mu_i\right)dy_{0:N}.\label{eq:Jd2}
\end{align}

Following the same arguments presented in \cite{Blackmore2006}, the right side of \eqref{eq:Jd2} can be upper bounded as
\begin{align}
\int\limits_{\mathbb{R}^{m(N+1)}}\sum\limits_{\mu=1}^{2^p} P(y_{0:N}\big|\mu_i,u_{0:N-1})P\left(\mu_i\right)dy_{0:N}\leq \hat{J}_d(u_{0:N}),
\end{align}
where
\begin{align}
\hat{J}_d(u_{0:N})=\sum\limits_{i=1}^{2^p}\sum\limits_{j=i+1}^{2^p}\sqrt{P(\mu_i)P(\mu_j)}e^{-\phi_{ij}},\label{eq:Jdhat}
\end{align}
with 
\begin{align}
\phi_{ij}&=\frac{1}{4}\left(\bar{y}_{0:N|\mu_j}-\bar{y}_{0:N|\mu_i}\right)^T\left(H_{y|\mu_i}+H_{y|\mu_j}\right)^{-1}\cdot\nonumber\\
&\left(\bar{y}_{0:N|\mu_j}-\bar{y}_{0:N|\mu_i}\right)+\frac{1}{2}\ln\left(\frac{\text{det}\left(\frac{H_{y|\mu_i}+H_{y|\mu_j}}{2}\right)}{\sqrt{\text{det}(H_{y|\mu_i})\text{det}(H_{y|\mu_j})}}\right)\label{eq:phi}
\end{align}
where $\text{det}(\cdot)$ is the determinant function. It is noteworthy that according to \eqref{eq:xbar}-\eqref{eq:Hxbar}, \eqref{eq:Hybar}, and since $\bar{y}_{k|\mu_i}=C\bar{x}_{k|\mu_i}$, the upper bound $\hat{J}_d(\cdot)$ given in \eqref{eq:Jdhat} is an explicit function of the control sequence. This completes the proof.
\end{proof}

\subsection{Constraints}
By using the open loop approach, the expectational constraints given in \eqref{eq:constraints} take the following form:
\begin{align}
&\text{E}\left[G_{x}x_{k|\mu_i}+G_{u}u_k\big|u_{0:k-1}\right]\leq g\Rightarrow\nonumber\\
&G_{x}(A^k\bar{x}_{0|\mu_i}+A^{k-1}B_{\mu_i}u_0+\cdots+B_{\mu_i}u_{k-1})\nonumber\\
&+G_{u}u_k\leq g,\label{eq:constraintsFinal}
\end{align}
which is an explicit function of the control sequence.

\subsection{Proposed Solution}\label{sec:solution}

One possible way to pursue both control and detection aims is to let one of the objective functions to take arbitrary value up to a known upper limit value, and then to enforce this as a constraint and minimize the other objective function. Therefore, the following two optimization problems can be considered
\begin{equation}\label{eq:OptProb1}
u_{0:N-1}^\ast=\left\{
\begin{array}{ll}
& \arg\,\min\limits_{u_{0:N-1}}J_c(\cdot)\text{ given in \eqref{eq:ControlOFFinal}}\\
\text{s.t.} & \eqref{eq:constraintsFinal}\text{ is satisfied  }\forall i,~\forall k\geq0\\
& \hat{J}_d(\cdot)\text{ given in \eqref{eq:Jdhat}}\leq\bar{J}_d
\end{array}
\right.,
\end{equation}
or
\begin{equation}\label{eq:OptProb2}
u_{0:N-1}^\ast=\left\{
\begin{array}{ll}
& \arg\,\min\limits_{u_{0:N-1}}\hat{J}_d(\cdot)\text{ given in \eqref{eq:Jdhat}}\\
\text{s.t.} & \eqref{eq:constraintsFinal}\text{ is satisfied  }\forall i,~\forall k\geq0\\
& J_c(\cdot)\text{ given in \eqref{eq:ControlOFFinal}}\leq\bar{J}_c
\end{array}
\right.,
\end{equation}
where $\bar{J}_d$ and $\bar{J}_c$ are maximum acceptable levels of the detection and control objective functions, respectively. 


The objective function \eqref{eq:ControlOFFinal} and constraints given in \eqref{eq:constraintsFinal} are convex in $u_{0:N-1}$. The objective function \eqref{eq:Jdhat} is concave, as $\phi_{ij}$ as in \eqref{eq:phi} is a quadratic function of $u_{0:N-1}$ (with a positive definite matrix), and consequently convex in $u_{0:N-1}$. Thus, problems \eqref{eq:OptProb1}-\eqref{eq:OptProb2} are in general non-convex. We use \texttt{bmibnb} \cite{Lofberg2004} to numerically compute their solutions. 

\section{Simulation Study-- Irrigation Channel}\label{sec:simulation}
In this section we will use the developed method to control the level of water in pools 9 and 10 of the Haughton main channel, as shown in Fig. \ref{fig:WaterCanal}. The water levels in the channel are controlled by overshot gates located along the channel. The stretch of a channel between two gates is referred to as a reach or a pool. We assume that the communication between the controller and the gates is through internet. 


The water level in the $g$-th pool ($g\in\{9,10\}$) of the irrigation channel can be modeled as \cite{Zhang2005} \setcounter{equation}{30}
\begin{align}
\dot{y}_g(t)=\alpha_{g-1,in}h_{g-1}^{3/2}(t-\tau_{g-1})-\alpha_{g,out}h_{g}^{3/2}(t)+d_{g-1}(t),
\end{align}
where $y_g(t)$ is the water level in the pool, $h_g(t)$ is the head over the gate (the height of water above the gate), $\tau_g$ is the time delay which accounts for the time it takes for the water to travel from the upstream gate to the downstream gate in the $g$-th pool, $d_g(t)$ represents offtakes to farms and side channels, and $\alpha_{g,in}$ and $\alpha_{g,out}$ are constants which incorporates the effect of the discharge coefficients. The real value of the parameters is given in TABLE \ref{tab:Parameters}. For the sake of simplicity we assume there is no offtake, i.e., $d_g=0,~g=8,9$.

The sampling time is 10 [min], and the control signal is the head over the gate. We assume that initial water level in pool 9 and 10 is 6.60 [m] and 5.60 [m], respectively. Also, we assume that $w_k\sim\mathcal{N}\left(\textbf{0},0.3I_8\right)$ and  $v_k\sim\mathcal{N}\left(\textbf{0},0.3I_2\right)$, where $I_2$ is the $2\times2$ identity matrix and $\textbf{0}$ is the zero vector with appropriate size. The water level in pools should not exceed 15 [m]. The system is subject to actuator saturation \cite{Hosseinzadeh2017}, i.e., the control signals cannot be negative.

Since there are three actuators, eight different modes can be defined. We assume that \textit{a priori} probability of the $i$-th mode is  $P(\mu_i)=0.125,~\forall i$.

Suppose that $Q=I_2$, $R=I_3$, the detection horizon is 200 [min], and the level of detection and control objective functions must not exceed $1$ and $2000$, respectively.

\begin{figure}[t]
\centering
\includegraphics[width=8cm]{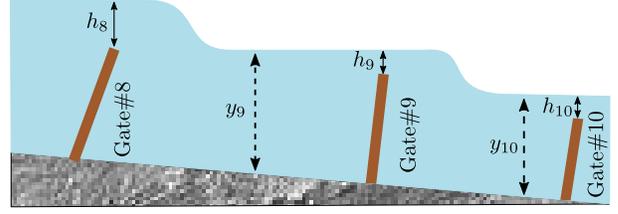}
\caption{Side view of the Haughton main channel; pools 9 and 10.}
\label{fig:WaterCanal}
\end{figure}

\begin{table}[!t]
\centering
\caption{Parameters of the Haughton Main Channel \cite{Weyer2001}.}\label{tab:Parameters}
\footnotesize
\begin{tabular}{c|c|c|c}
Parameter & $g=8$ & $g=9$ & $g=10$ \\
\hline
$\alpha_{g,in}$ [1/m$^2$]& 0.0208 & 0.0700 & 0.0142\\
$\alpha_{g,out}$ [1/m$^2$]& 0.0278 & 0.0614 & 0.0156\\
$\tau_{g}$ [min]& 6 & 3 & 16
\end{tabular}
\end{table}

We assume that for $k\in[0,80]$, $[200,300]$, $[360,480]$, $[580,700]$ the mode of the system is 1, for $k\in[80,200]$ the mode of the system is 8, for $k\in[300,360]$ the mode of the system is 2, and for $k\in[480,580]$ the mode of the system is 7. Note that for comparison purposes, we also simulate a pure control formulation, i.e., \setcounter{equation}{32}
\begin{equation}\label{eq:OptProb3}
u_{0:N-1}^\ast=\left\{
\begin{array}{ll}
& \arg\,\min\limits_{u_{0:N-1}}\,J_c(\cdot)\text{ given in \eqref{eq:ControlOFFinal}}\\
\text{s.t.} & \eqref{eq:constraintsFinal}\text{ is satisfied  }\forall i,~\forall k\geq0
\end{array}
\right..
\end{equation}



The achieved normalized values of the control and detection objection functions are shown in Fig. \ref{fig:cost1} and \ref{fig:cost2}, where control and detection costs obtained by the formulation \eqref{eq:OptProb3} are assumed as the base unit quantity for control and detection costs, respectively. As expected, compromise between control and detection aims increases the control cost $J_c$. However, it decreases the detection cost $\hat{J}_d$ which means that probability of misidentification is minimized.




\section{Conclusion}\label{sec:conclusion}
This paper proposed an optimization approach for active attack detection and control of constrained CPSs systems subject to expectational linear constraints. This paper mainly focused on PA2 attack, where the attacker prevents the exchange of information between the controller and the actuators. A set of parallel detectors based on hypothesis testing approach was proposed. Using a probabilistic approach to deal with uncertainties, the detection and control aims were formulated as two separate stochastic objective functions. The open loop approach was deployed to transfer the stochastic functions to deterministic ones. Two alternative compromise between detection and control aims were presented in the form of a constrained optimization problem. The effectiveness of the proposed active approach was validated through simulation studies.

\begin{figure}[!t]
\centering
\includegraphics[width=8cm]{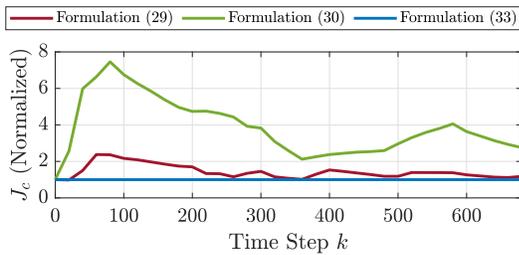}
\caption{Normalized control cost by formulations \eqref{eq:OptProb3}, \eqref{eq:OptProb1}, and \eqref{eq:OptProb2}.}
\label{fig:cost1}
\end{figure}

\begin{figure}[!t]
\centering
\includegraphics[width=8cm]{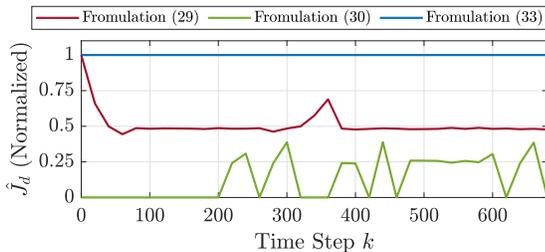}
\caption{Normalized detection cost by formulations \eqref{eq:OptProb3}, \eqref{eq:OptProb1}, and \eqref{eq:OptProb2}.}
\label{fig:cost2}
\end{figure}

\balance
\bibliographystyle{IEEEtran}
\bibliography{ref}{}

\end{document}